\documentclass[11pt]{article}
\usepackage[small,compact]{titlesec}
\usepackage{amsmath,amssymb,amsfonts,mathabx,setspace}
\usepackage{graphicx,caption,epsfig,subfigure,epsfig,wrapfig,color}
\usepackage[T1]{fontenc}

\usepackage{setspace}

\setlength{\parskip}{5pt}

\usepackage{natbib}

\newtheorem{theorem}{Theorem}

\newtheorem{definition}[theorem]{Definition}

\newtheorem{proposition}[theorem]{Proposition}

\newenvironment{proof}[1][Proof]{\noindent\textbf{#1.} }{\ \rule{0.5em}{0.5em}}

\numberwithin{equation}{section}
\numberwithin{theorem}{section}

\def\Lop{\mathcal{L}}
\def\Rop{\mathcal{R}}
\def\t{\alpha}
\def\e{\epsilon}
\def\R{\mathbb{R}}                            

\def\EXP{\mathbb{E}}
\def\VAR{\mathbb{V}\text{ar}}
\def\Dom{{\rm Dom}}

\newcommand{\ud}[1]{\, \mathrm{d}#1}
\newcommand{\deriv}[3][]{\frac{\ud^{#1} \hspace{-0.3mm} #2}{\ud{#3}^{#1}}}
\newcommand{\pderiv}[2]{\frac{\d{#1}}{\d{#2}}}
\renewcommand{\div}[1][]{\nabla_{\!\! #1}\cdot \!}
\newcommand{\grad}{\nabla }
\renewcommand{\d}{\partial}

\newcommand{\jac}{\boldsymbol{\nabla}}

\begin{document}

\begin{center}
\textbf{\Large Probabilistic Measures for Biological Adaptation and Resilience\footnote{{\it Corresponding Author:} Juan M. Restrepo, {\tt restrepojm@ornl,gov}} 
} \\[0pt]
\vspace{4mm} Jorge M. Ramirez$^1$, Juan M. Restrepo$^{1,2}$, Valerio Lucarini$^{3,4}$, David Weston$^1$\\
$^1$ Oak Ridge National Laboratory, Oak Ridge TN 37831 USA \\
$^2$ Department of Mathematics, University of Tennessee, Knoxville, Knoxville TN 37912 USA \\
$^3$ Department of Mathematics and Statistics, 
University of Reading, Reading UK \\
$^4$ Centre for the Mathematics of Planet Earth, 
University of Reading, Reading UK 
Last updated:  \today \\
\end{center}

\begin{abstract}
This paper introduces a novel approach to quantifying ecological resilience in biological systems, particularly focusing on noisy systems responding to episodic disturbances with sudden adaptations. Incorporating concepts from non-equilibrium statistical mechanics, we propose a measure termed `ecological resilience through adaptation,' specifically tailored to noisy, forced systems that undergo physiological adaptation in the face of stressful environmental changes. Randomness plays a key role, accounting for model uncertainty and the inherent variability in the dynamical response among components of biological systems. Our measure of resilience is rooted in the probabilistic description of states within these systems, and is defined in terms of the dynamics of the ensemble average of a model-specific observable quantifying success or well-being. Our approach utilizes stochastic linear response theory to compute how the expected success of a system, originally in statistical equilibrium, dynamically changes in response to a environmental perturbation and a subsequent adaptation. The resulting mathematical derivations allow for the estimation of resilience in terms of ensemble averages of simulated or experimental data. Finally, through a simple but clear conceptual example, we illustrate how our resilience measure can be interpreted and compared to other existing frameworks in the literature. The methodology is general but inspired by applications in plant systems, with the potential for broader application to complex biological processes.

\end{abstract}

\noindent {\it Key words:}   resilience, adaptation, stress, linear response theory. \\

\noindent {\bf Data Availability:} the data is computer generated. Codes are available from the Authors, upon request. Acknowledgement of this paper is a necessary condition for use of the codes that generated the results.\\

\section* {Acknowledgements}
Research was supported by Laboratory Directed Research and Development Program of Oak Ridge National Laboratory, managed by UT-Battelle, LLC, for the U.S. Department of Energy under contract no. DE-AC05–00OR22725. The Department of Energy will provide public access to these results of federally sponsored research in accordance with the DOE Public Access Plan (http://energy.gov/downloads/doe-public-access-plan).
 VL acknowledges the
support received from the Horizon 2020 project TiPES (Grant No. 820970), from the EPSRC project EP/T018178/1, and from the University of Reading's RETF project CROPS.
 
\noindent {\bf Conflict of Interest Statement:} The authors have no conflict of interest. 

\noindent {\bf Authors Contributions:} Jorge Ramirez and Juan Restrepo conceived the ideas, designed and executed the methodology. David Weston proposed the biological problem and provided critical biological context and interpretation. Valerio Lucarini provided expertise in linear response theory and contributed to the interpretation of the results; Juan Restrepo and Jorge Ramirez led the writing of the manuscript. All authors contributed critically to the drafts and gave final approval for publication.

\newpage
\section{Introduction}

Resilience refers to the ability of a system to maintain a certain degree of functionality in the face of disturbances \citep{Dakos22}.  In the seminal work by \citet{Holling73, holling96}, resilience is conceptually categorized as `engineering' or `ecological'. In engineering resilience, perturbations are small and the system returns to an original equilibrium state. Ecological resilience, on the other hand, applies to systems with multiple equilibria and perturbations that can induce the system to move between different attractors. Our focus is ecological resilience, in particular applied to biological systems that operate out of equilibrium and that respond by sudden adaptations as a result of episodic disturbances.

Multiple quantitative resilience measures have been proposed for living systems based upon dynamical systems theory, including linear stability of equilibria, return times, attractor size and geometry, distance to bifurcation manifolds, elasticity and histeresis in response to different types of perturbations: pulses, presses, ramps, deterministic and stochastic. See  \citet{angeler2016quantifying,yi2021review,van2021unifying,Dakos22} for useful reviews. These measures receive often interchangeabe names as resilience, resistance, robustness, stability, recovery, malleability, and tolerance. Altogether, these quantities can help understand how living systems adjust, recover or heal after perturbations, and account for the degree to which they return back to an unperturbed state or to transition to another metastable but viable operating point. 

The mathematical framework of preference for quantifying resilience has been the theory of dynamical systems as in \citet{guck83,Krakovska16,arnoldi2016resilience}, with random perturbations by \cite{Meyer16} and \cite{arani2021exit}, or in networked systems by  \cite{shang2023matrix, shang2023resilient}. Here we argue that ecological resilience can be better achieved by conceptualizing the systems of interest as noisy/forced systems, applying the techniques that occupy the attention of non-equilibrium statistical mechanics. Such a probabilistic framework will be conducive to a notion of ecological resilience that emphasizes persistence and reflects the opportunistic and unpredictable aspects of change in biological systems. It will measure the degree to which a system can adapt to a perturbation by controlling its behavior in order to increase the probabilities of not falling into a state of low well-being. 

Our argument is based on the observation that the temporal evolution of observable biological quantities is rich with noise as a result of the inherent uncertainty in observables and the variability among individuals, or more generally, subsystems. It also incorporates the epistemic error associated with incomplete or uncertain model parameterizations.  Therefore a system’s ability to adjust its function to disturbances can be conceptualized in terms of the effect of the perturbation over the probability distribution of states and the expected value of some measure of well-being. Employing a probabilistic description of these observables allows for a more nuanced understanding of these systems, particularly in describing the living states and behavior that extend beyond the realm of deterministic dynamics.

We propose a measure of \textit{ecological resilience through adaptation} applicable to noisy, forced systems undergoing environmental perturbations. The goal is to provide a mathematical framework in which the dynamic response of the system to perturbation can be quantified, either from ensembles of models or data realizations. Specifically, we provide a probabilistic measure of the degree to which a system in statistical equilibrium can dynamically adapt, recover, or change equilibrium distribution after an stressful perturbation on the parameters modeling its environmental conditions. Here, we use the word `adaptation' in the physiological sense, namely as the ability of living systems to adjust their dynamics (e.g. metabolism) in response to its changing environment. The degree of `recovery' is conceptualized in terms of a \textit{success function} of the state variables designed to model performance, health, productivity; it is an arbitrary measure of well-being that the modeler seeks to maintain as high as possible in average for any given environmental conditions. The probabilistic approach allows, not only for dynamic models of aleatoric, epistemic or measurement uncertainty (e.g. observations and parameters in stochastic differential equations), but also to assess resilience of an aggregate or ensemble of subsystems exhibiting random variations on their state variables. 

Our formulation uses modern methods in stochastic process theory to capture changes in the probabilistic distribution of the state of a forced/noisy systems that is dynamically perturbed. Specifically, we rely on \textit{linear response theory}  to predict how the expected value of the success function changes in response to environmental disturbances and physiological adaptations. This approach has firm foundations for both deterministic chaotic dynamical systems \citep{Ruelle2009} and their stochastic counterpart  \citep{Hairer2010}. Indeed, linear response theory allows us to compute the system’s response to disturbances using response operators acting on the unperturbed system, hence our framework provides practical ways of designing ensemble experiments from which resilience through adaptation can be estimated. 

Important insights into the intricacies of the effect of disturbances to complex systems have been elucidated through linear response theory. For example, \citet{Held2004,Lenton2008} showed that the divergence of the response operator occurs in the proximity of a system to critical behavior due to bifurcations. Moreover the presence of very high sensitivity to perturbations has been showed to be connected to the presence of the so-called critical slowing down, {\it i.e.},  the presence of slow decay of correlations \citep[see][]{Scheffer:2009aa,santos2022}.  As shown in \cite{boettner22} both phenomena are due to the near-prevalence of positive, destabilizing feedbacks over the negative, stabilizing ones. In other words, adaptation can be slow and inefficient, which reduces the overall system's resilience. 

In the context of ecological resilience, to the best of our knowledge, this is the first study that applies linear response theory to perturbations on noisy forced systems. In fact, very few articles have considered ecological resilience for stochastic systems. The stochastic model of \citet{arnoldi2016resilience}, for example, treats noise as the source of perturbation to a linear deterministic system and quantifies resilience as the degree of stochastic variability around the equilibrium point. This type of analysis, which we call \textit{path resilience} is not the focus of the present study. For us, noise is a fundamental component of the dynamics, and we consider structural perturbations, namely changes to the model parameters. The work of \cite{ives1995measuring} shares some similarities with our work but is much more limited in scope. There, the author  investigates the response of populations interacting via a particular logistic model to perturbations on intrinsic growth rates. Resilience in \cite{ives1995measuring} is not measured with respect to adaptation, but as a change of the population's statistical variability per unit perturbation.  

This article focuses on the conceptual and mathematical basis of ecological resilience rather than on the intricacies of a given model. The formulation is quite general but inspired by applications to the quantification of resilience of living systems, in particular plants and, in principle, should be generalizable to complex systems comprising multiple biological processes. Throughout, we highlight along the mathematical derivations those concepts or processes of plant resilience that could be modeled by our framework, although application to real plant-system models is left for future work. For illustration, we use a simple bifurcating stochastic differential equation to exemplify our notion of resilience, as to not be encumbered by the complexities of a natural system.

The organization is as follows. In Section \ref{sec:model}, we describe the model that conceptualizes the dynamics of observables. These observables are described as time dependent multi-dimensional probabilistic distributions associated with the organism function. Section \ref{sec:resilience} introduces our measure of resilient adaptation along with a comparative analysis contrasting our proposal with existing methods. In Section \ref{sec:example} we carry out the process of quantifying the resilience to adaptation to a simple and familiar stochastic dynamics problem described by a Langevin equation. Finally, in Section \ref{sec:conc} we summarize the proposed measure of resilience to adaptation measure and discuss the assumptions and conditions required for its applicability

\section{Preliminaries}
\label{sec:model}

We know describe the mathematical framework upon which ecological resilience through adaptation is formulated. Our focus is on the temporal dynamics of a state variable $X(t) \in \R^N$ representing chemical and/or physical variables within a living system. In the case of plant physiology these could be the variables involved in a photosynthesis model ({\it e.g.}, water potential, carbohydrate concentrations or flows, chlorophyl, stomata aperture). We will also assume that there are identifiable environmental  stressors $\e \in \R^P$ ({\it e.g.}, solar radiation, ambient temperature, ambient or soil moisture) that have known or measurable effects on the dynamics of $X$. The variables that take on the role of adaptation variables also appear as  parameters in the dynamics of $X$. We label the adaptation variables by $\alpha \in \R^Q$. Our theoretical model for the evolution of the observable subset of the state variables is a stochastic differential equations \cite{pavliotis} of  the form 
\begin{equation}
 \ud X = F_{\e,\alpha}(X) \ud t + \sigma \ud W_t, \quad t>0,  \quad X(0) \sim p_0. 
 \label{eq:model}
 \end{equation}
 The drift term $F_{\e,\alpha}$ is known and of dimensions $\R^N$ for constant values of $\e$ and $\t$. The incremental Wiener processes $\ud W_t$ has the same dimensions as $F_{\e,\t}$. The assumed constant $\sigma \in \R^{N\times N}$ is a non-negative symmetric noise amplitude matrix. 

Model \eqref{eq:model} includes two sources of variability in the dynamics of $X$: for any fixed initial starting condition $X(0)$, an infinite amount of $X(t)$ histories can be generated (each corresponding to a realization of $W$), reflecting the random/noisy nature of the differential equation itself due to aleatoric or epistemic error. The other source of variability is encoded in the initial conditions: To model sub-system variability, we will assume that $X(0)$ is taken from known probability distribution $p_0 \in \R^N$. This aspect of the model accommodates for variability in the biological system itself ({\it e.g.}, variability in leaves of the same plant).

The function $F_{\e,\alpha}$ encodes all the known deterministic regulatory dynamics of the system. The noise represents random fluctuations that are present in the dynamics of $X$ and are modeled by an additive diffusion process which, in this work, are assumed independent of $\e$ and $\t$ although the analysis can be naturally extended. For current plant function models at leaf scale \citep[see][for example]{fatichi2014moving,fatichi2016modeling} the state variable has dimension $N \sim 10$ and include concentrations and fluxes important for carbon assimilation and transpiration, as well as vascular transport. The environmental variables $\epsilon$ are fewer and determine atmospheric and soil boundary conditions.  The distinction between the adaptation parameters in $\t$ and the state variables in $X$ is more subtle. Typically $\t$ includes regulatory or control variables whose dynamics are left out equation \eqref{eq:model} either because they operate at different length scales, are poorly understood, or can be deliberately changed in experiments. In the case of plants experiencing water deficit, for example, regulatory variables include the concentration abscisic acid, ions of calcium and potassium, and stomatal aperture \citep{willey2018environmental}.

In order to emphasize the dependence of the system on the parameters, we denote by $X_{\e,\t} = \{X_{\e,\t}(t): t\geq 0\}$ the solution to equation \eqref{eq:model} for constant values of $\e$ and $\t$. The key to our proposal is the mean evolution of observables of the process, namely expectations of the form $\EXP S(X_{\e,\t}(t))$ where $S: \R^N \to \R$ is some measure of success or well-being to be discussed further below. For now, in what follows, we explain the mathematical notation and background required for estimating the response of $\EXP S(X_{\e,\t}(t))$ to changes in $\e$ and $\t$, in terms of linear response theory. For details see \cite{pavliotis}. 

We suppose that that the drift $F_{\e,\t}(x)$ in \eqref{eq:model} is sufficiently smooth as a function of the state variable $x$ so that the strong solution to equation \eqref{eq:model} is a diffusion process. We also assume that $F_{\e,\t}$ is differentiable with respect to the parameters $\e$ and $\t$. For simplicity, we also suppose that the noise amplitude matrix is diagonal and isotropic $\sigma = \sigma I_N$ for some $\sigma >0$. The process $X_{\e,\t}$ has an infinitesimal generator
\begin{equation}\label{def:LopB}
    \Lop_{\e,\t}[S] = F_{\e,\t}\cdot \grad{S} + \frac{\sigma^2}{2} \grad^2 S, 
\end{equation}
for all functions $S$ in the domain $\Dom(\Lop_{\e,\t})$ which is supposed to be independent of $\e$ or $\t$, and dense within a Banach space $\mathcal{B}(\R^N)$. The operator $\Lop_{\e,\t}$ determines the evolution of the semigroup\\
\begin{equation}
    \EXP_x \, S(X_{\e,\t}(t)) = e^{t \Lop_{\e,\t}}[S](x), \quad t \geq 0
\end{equation}
for all $S \in \Dom(\Lop_{\e,\t})$. The subscript $x$ in the expectation denotes conditioning on the initial value $X_{\e,\t}(0) = x$. For $\lambda >0$, we denote by $\Rop^{(\lambda)}_{\e,\t}$ the resolvent operator of $\Lop_{\e,\t}$. Namely 
\begin{equation}\label{def:Resolvent}
    \Rop^{(\lambda)}_{\e,\t}[S] = (\lambda - \Lop_{\e,\t}[S])^{-1} = \int_0^\infty e^{-\lambda t} e^{t \Lop_{\e,\t}}[S](x)  \ud t.
\end{equation}
for any $S \in \Dom(\Lop_{\e,\t})$. The adjoint to \eqref{def:LopB} is the Fokker-Planck operator
\begin{equation}\label{def:LopF}
    \Lop^*_{\e,\t}[p] = -\div (p F_{\e,\t}) + \frac{\sigma^2}{2} \grad^2 p.
\end{equation}
For a initial distribution $p_0$, the evolution of the probability distribution $p_{\e,\t}(t,x)$ of $X_{\e,\t}(t)$ conditional to $X_{\e,\t}(0) \sim p_0$ evolves according to the `forward equation',
\begin{equation}
    \pderiv{p_{\e,\t}}{t} = \Lop^*_{\e,\t}[p_{\e,\t}], \quad p_{\e,\t}(0) = p_0,
\end{equation}
namely $p_{\e,\t}(t) = e^{t\Lop^*_{\e,\t} }[p_0]$. Hence expectations of an observable can be computed as  
\begin{equation}\label{eq:EpoS}
    \EXP_{p_0}\, S(X_{\e,\t}(t)) = \int_0^t S(x) e^{s\Lop^*_{\e,\t} }[p_0](x) \ud s
\end{equation}
for any bounded $S:\R^N \to \R$. Lastly, we will assume that for all $\e,\t$ of interest, the diffusion $X_{\e,\t}$ is ergodic with unique invariant probability $\bar{p}_{\e,\t}$. Namely $\Lop^*_{\e,\t}[\bar{p}_{\e,\t}] = 0$ and expectations can be computed as
\begin{equation}\label{eq:ergodicity}
        \lim_{t \to \infty} \EXP_x(S(X_{\e,\t}(t))) = \EXP_{\bar{p}_{\e,\t}}(S(X_{\e,\t}(t))) = \int S(x) \bar{p}_{\e,\t}(x) \ud x =: \bar{S}_{\e,\t}
\end{equation}
for all $t\geq 0$ and $x \in \R^N$, which amounts to assuming that for any ensemble described by $p_0$, the process $X_{\e,\t}(t)$ asymptotically converges in probability to the invariant measure $\bar{p}_{\varepsilon,\alpha}$.

\section{Resilient Adaptation}
\label{sec:resilience}

We will propose a measure of resilience aimed at quantifying the ability of the system to adapt through changes on $\t$, to environmental disturbances on $\e$ that are stressful with respect to some measure of performance, well-being, productivity, or success. 

\subsection{Background}
\label{subsec:background}

It is common to consider the resilience of deterministic homeostatic systems evolving along a stability landscape. Namely a model of the form $\ud X/\ud t = F_{\e,\t}(X)$ where $F_{\e,\t} = \grad V_{\e,\t}$ and $V_{\e,\t}$ is a potential surface. Resilience is studied by perturbing the state $X$ away from a stable equilibrium and analyzing its homeostatic relaxation to the same or other stable equilibria \citep[see][]{van2021unifying,Dakos22}. 

The noisy case presented in equations \eqref{eq:model} has been widely studied in non-equilibrium statistical mechanics literature as a model of a system evolving towards a potential energy minimum with random fluctuations (see \cite{brenig2012statistical,pavliotis}, for example). The paths of $X$ do not necessarily converge to equilibrium states, but will actually transition randomly between attractors due to the combination of noise or forcings. This stochastic homeostatic behavior has been used by \citet{Meyer16,Krakovska16,arani2021exit} and others to study what we call `path-wise resilience' to random perturbations. Namely, paths generated from any initial state drift towards the neighborhood of the local minima of $V_{\e,\t}$ just as in the deterministic case, but are continuously subjected to random fluctuations whose magnitude depend on the amplitude $\sigma$ of the noise process. If $\sigma$ is sufficiently large, or enough time passes, these fluctuations will drive paths across any unstable equilibrium into another basin of attraction where the system will recover in a different homeostatic state.  In fact, estimates for the probability of such changes and the average time they take are well-known for the dynamics given by equation \eqref{eq:model} and are described in the weak-noise limit by large deviation laws  \cite{freidlin1998,gardiner,arani2021exit}. This, however, is not our focus. We are interested in structural resilience with respect to parameter changes, and on the average over all possible paths. 

In contrast with the two approaches described above, we regard biological systems as \textit{homeodynamic}, and assume that they can transform their dynamics through behavioral changes in response to perturbations \citep{lloyd2001homeodynamics}. We refer to this behavioral changes as adaptations and model them as changes $\t \to \t + \Delta \t$ in response to environmental perturbations of the form $\e \to \e + \Delta \e$. The homeodynamic response, when described in terms of temporally-dependent distributions, can be analyzed using linear response theory. Specifically, we can obtain first order estimates to the sensitivity of the expectation of key observables associated with disturbances to changes $\Delta \e$ and $\Delta \t$. 

Further, we conceptualize ecological resilience in terms of a \textit{success function} $S: \R^N \to [0,\infty)]$; an observable that measures well-being, fitness, productivity, etc., as a function of the system state $X_{\e,\t}(t)$ at any given time, and under specific operating conditions determined by parameters \( \e \) and \( \t \). States $x$ for which $S(x)$ is close to zero are associated with biological stress. For example, in plants, \( S \) could represent photosynthetic output under various environmental conditions. Mathematically, we assume $S$ is a bounded function belonging to $\Dom(\Lop_{\e,\t})$ for all values of interest of $\e,\t$. 

The study focuses on the dynamics of the mean of \( S(X_{\e,\t}(t)) \) rather than the potentially complex and high-dimensional sample paths of \( X_{\e,\t}(t) \). Specifically, resilience is quantified in terms of  how the average success, expressed as \( \EXP S(X_{\e,\t}(t)) \), reacts to changes to $\e$ and $\t$. The expectation here is meant as an ensemble mean over a large group of individuals or subsystems.

\subsection{Perturbation, Adaptation and Resilience}

The solution $X_{\e,\t}$ to \eqref{eq:model} represents the dynamics of the system under constant environmental and metabolic conditions. In the context of resilience we are interested in the ensemble properties of the solution under changing values of $\e$ and adaptations of $\t$. Specifically, we consider a process $X$ defined by the following three steps: For $t<0$ the system is evolving in probabilistic equilibrium under the operation conditions $(\e_0,\t_0)$, namely $X(t)\sim \bar{p}_{\e_0,\t_0}$ for all $t<0$. At $t=0$ a sudden environmental disturbance $\e_0 \to \e_1$ occurs. After a random time $\tau \sim \exp(\lambda)$ the system adapts by switching $\t_0 \to \t_1$. See Figure \ref{fig:cartoon}. For $t \geq 0$ the resulting process can be described as
\begin{equation} \label{def:fullX}
X(0) \sim \bar{p}_{\e_0,\t_0}, \quad 
    X(t) = \begin{cases}
        X_{\e_1,\t_0}(t) & 0<t \leq \tau \\
        X_{\e_1,\t_1}(t) & t>\tau
    \end{cases}
\end{equation}
where the continuity condition $X_{\e_1,\t_0}(\tau^-) = X_{\e_1,\t_1}(\tau^+)$ is assumed to hold with probability one.

\begin{figure}
    \centering
    \includegraphics[scale=0.8]{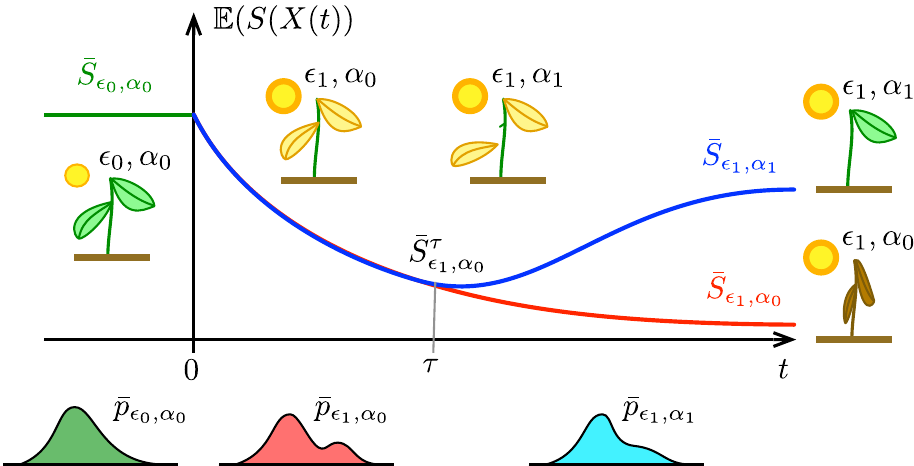}
    \caption{Schematic representation of the evolution of the expected success $\EXP S(X(t))$ for $X$ given by \eqref{def:fullX}. For $t<0$ the system is in statistical equilibrium under $\bar{p}_{\e_0,\t_0}$. At $t=0$ the perturbation $\e_0 \to \e_1$ occurs (e.g. temperature increases) and $S(X(t))$ starts decreasing in average. If no adaptation takes place, the red curve is followed towards the equilibrium distribution $\bar{p}_{\e_1,\t_0}$. Adaptation occurs at the random time $\tau$ and the resilient system recovers to the equilibrium distribution $\bar{p}_{\e_1,\t_1}$.}
    \label{fig:cartoon}
\end{figure}

Note that the adaptation $\t_0 \to \t_1$ is assumed to occur instantaneously at a random `reaction time' $\tau$ representing the time it takes for the regulatory signals to take effect. Randomness here means that each individual suffering the disturbance might react at a different time, but in the ensemble, these times follow the prescribed exponential distribution with mean $1/\lambda$ units of time. This modeling assumption can be interpreted as consistent with a case in which the underlying adaptation dynamics $\ud \t = g(\e,X) \ud t$ are poorly understood or deliberately left out of the model \eqref{eq:model}, or unresolved at the time scales of $t$. The choice of the exponential distribution for $\tau$ is parsimonious, and has the mathematical advantage of making $X$ a Markov process.

We are interested in the dynamical behavior of the ensemble mean $\EXP(S(X(t))$ of the success throughout the sequence of invariance-perturbation-adaptation. For constant $\e,\t$ we denote by $\bar{S}_{\e,\t}$ the mean of $S$ under the invariant distribution $\bar{p}_{\e,\t}$. Note that by the construction of $X$, 
\begin{equation}\label{eq:ES0inf}
    \EXP(S(X(0))  = \bar{S}_{\e_0,\t_0}, \quad \lim_{t \to \infty}  \EXP(S(X(t)) = \bar{S}_{\e_1,\t_1}.
\end{equation}
In the context of resilience we study the trajectory in time of the ensemble average of the success between the extremes in \eqref{eq:ES0inf}. See Figure \ref{fig:cartoon}.

Concepts related to resilience to parameter disturbances are usually defined with respect to the initial, worst, and final states of an observable during the process of perturbation/response/recovery. See for example \cite{yi2021review,van2021unifying}. In the context of the time evolution of the expected success, the usual framework proposes the following definition of resilience: 
\begin{equation}\label{def:usualR}
    R = \frac{\bar{S}_{\e_1,\t_1}- \min_{t \geq 0} \EXP(S(X(t))}{\bar{S}_{\e_0,\t_0} - \min_{t \geq 0} \EXP(S(X(t))}
\end{equation}
which in \cite{yi2021review} is called `recovery', and equals one minus `recovery capacity' over `resistance' in the notation of \cite{van2021unifying}. 

One drawback of definition \ref{def:usualR} is that the term $\min_{t \geq 0} \EXP(S(X(t))$ is not an ensemble average. Furthermore its computation requires estimation of $\EXP S(X(t))$ all times. To address this, we exploit the assumption of the existence of a population-wide reaction time with known probability distribution and define `resilience adaptation' as follows.

\begin{definition}\label{def:Rtau}
  Let $X$ be the solution to \eqref{def:fullX}, $S$ the success function and $\tau$ the adaptation time. Denote $\bar{S}^\tau_{\e_1,\t_0} := \EXP S(X(\tau))$. Then the resilience of $X$ to the perturbation $\e_0 \to \e_1$ and through the adaptation $\t_0 \to \t_1$ is   
  \begin{equation}\label{eq:Rtau}
    R_\tau = \frac{\bar{S}_{\e_1,\t_1}- \bar{S}^\tau_{\e_1,\t_0} }{\bar{S}_{\e_0,\t_0} - \bar{S}^\tau_{\e_1,\t_0}}.
\end{equation}
\end{definition}

We will argue that $R_\tau$ is a practical and informative measure of ecological resilience, and that is consistent with existing methodologies. 

\subsection{Computing and interpreting Resilience}
\label{sec:Practical}

The resilience $R_\tau$ in \eqref{eq:Rtau} can be any real number, and has the following interpretation:
\begin{itemize}
    \item In the case of stressful environmental perturbations, where $\bar{S}_{\e_0,\t_0} > \bar{S}^{\tau}_{\e_1,\t_0}$, $R_\tau$ is negative only if the adaptation $\t_0 \to \t_1$ is also detrimental with respect to the mean state of the system when the adaptation occurs. 
    \item A value $0<R_\tau<1$ measures the fraction of the success that the system was able to recover with the adaptation, with respect to the total loss of success due to the disturbance. 
    \item A value $R>1$ indicates a very resilient system, in which the long-term success after the adaptation $\bar{S}_{\e_1,\t_1}$ is larger than the initial $\bar{S}_{\e_0,\t_0}$.
\end{itemize}

With respect to the usual definition or resilience $R$ in \eqref{def:usualR}, we  note that is not equal to $R_\tau$ in general, although the example in Section \ref{sec:example} shows that it can be a good approximation. We argue, however that $R_\tau$ is a more practical measure of ecological resilience than $R$ because can be computed in terms only of averages. The main argument stems from the fact that while $\min \EXP S(X(t))$ is a deterministic ensemble diagnostic, the reaction time
$\tau$ pertains to the physiological ability of the system to adapt or heal, and is hence a random variable that can be modeled. This, as described below, opens up several possibilities for computation of $R_\tau$.

The usual resilience $R$ can be, in principle, computed from experiments that involve the comprehensive monitoring of a large population. In such an experiment, up to Gaussian errors, each individual must be subjected to the same environmental disturbance $\e_0 \to \e_1$ and react with the (possibly unknown) adaptation $\t_0 \to \t_1$ that has measurable effects on the success throughout time. The average of $S$ must be computed at enough times to discern the minimum value of $\EXP S(X(t))$. In this experimental context, computing $R_\tau$ might not be very practical because estimating $\bar{S}^\tau_{\e_1,\t_0}$ would require measuring the stress of each individual at the exact moment it adapts. However, if a model for the dependence of $X$ on $\e$ and $\t$ as in \eqref{def:fullX} is available, then $R_\tau$ is a more practical measure than $R$ because it can be estimated from an ensemble of simulations of fixed dynamics under fixed initial distributions. 

Furthermore, note that by \eqref{def:fullX}, we can write $\bar{S}^\tau_{\e_1,\t_0}$ by taking expectations jointly over $\tau$ and the paths of $X_{\e_1,\t_0}$ conditioned on $X_{\e_1,\t_0}(0) \sim \bar{p}_{\e_0,\t_0}$. This yields the following expression in terms of the resolvent (see equation \eqref{def:Resolvent}), 
\begin{equation}\label{eq:S10tau}
    \bar{S}^\tau_{\e_1,\t_0} =\EXP_{\bar{p}_{\e_0,\t_0}}\left[S(X_{\e_1,\t_0}(\tau)) \right] 
    = \lambda \int \Rop_{\e_1,\t_0}^{(\lambda)}[S](x) \, \bar{p}_{\e_0,\t_0}(x) \ud x.
\end{equation}
Hence, if a model $F_{\e,\t}$ is at hand and the operator $\lambda - \Lop_{\e,\t}[S]$ can be analytically or numerically inverted, one can estimate $\bar{S}^\tau_{\e_1,\t_0}$ only from observations or simulations of the unperturbed system. This feature is key, as we will demonstrate in Section \ref{sec:Estimation}, it yields useful estimation methods for $R_{\tau}$.

Our proposal $R_\tau$ can be related to other resilience measures applicable to noisy systems. First, in \citet{ives1995measuring}, resilience is measured with respect to perturbations that increase the variability of a dynamically evolving population. Namely, the variance of $X(t)$ is used as a an inverse metric of success. Although the analysis in \citet{ives1995measuring} is limited to rate of increase of the variance after an environmental perturbation, we can write Definition \eqref{def:Rtau} with 
respect to the variance as
\begin{equation}\label{def:Rvar}
    R^{\VAR}_\tau = \frac{\VAR (X(\tau)) - \VAR_{\bar{p}_{\e_1,\t_1}} (X) }{\VAR (X(\tau)) - \VAR_{\bar{p}_{\e_0,\t_0}} (X)}
\end{equation}
where $\VAR_{\bar{p}_{\e,\t}} (X)$ simply denotes the variance of the probability distribution defined by $\bar{p}_{\e,\t}$ and  $\VAR (X(\tau))$ is the variance of the process \eqref{def:fullX} at time $\tau$. Note that, with respect to \eqref{def:Rtau}, the signs of the numerator and denominator of \eqref{def:Rvar} where reversed, which has no effect except for emphasizing that both are positive quantities. The example in Section \ref{sec:example} illustrates the correspondence between $R_\tau$ and $R_{\tau}^{\VAR}$. 

Another quantity used to measure resilience is that of the characteristic return time to equilibrium after perturbation \citep[see][]{ives1995measuring,Meyer16,Krakovska16}. In the context of our formulation, this concept is related to the rate at which $\EXP S(X(t))$ diverges from $\bar{S}_{\e_0,\t_0}$ towards $\bar{S}_{\e_1,\t_0}$ right after the environmental perturbation, and the rate at which $\EXP S(X(t))$ converges to $\bar{S}_{\e_1,\t_1}$ after the adaptation. These rates are encoded in the largest non-zero eigenvalue $\rho_{\e,\t}$ of the forward operator $\Lop^*_{\e,\t}$ in \eqref{def:LopB} at the different stages of the process. The proposed resilience measure $R_\tau$ can be viewed as a comparison between the values of the inverse time-scales $\rho_{\e_1,\t_0},\lambda$ and $\rho_{\e_1,\t_1}$. A resilient system would be one in which $|\rho_{\e_1,\t_1}|$ is large compared to $|\rho_{\e_1,\t_0}|$.

\subsection{Estimating Ecological Resilience}
\label{sec:Estimation}

Linear response theory can be used to estimate the effect that a perturbation has on the distribution and averages of a non-linear stochastic process. Namely, assuming both $\Delta \e := \e_1-\e_0$ and $\Delta \t := \t_1 - \t_0$ are small, we will use linear response theory to give estimates to $\bar{S}_{\e_1,\t_1} - \bar{S}_{\e_0,\t_0}$ and $\bar{S}^{\tau}_{\e_1,\t_0} - \bar{S}_{\e_0,\t_0}$ which can, in turn, be used to give approximations to $R_\tau$ in \eqref{def:Rtau}.\\

For definiteness, consider the process $X_{\e_1,\t_0}$ with initial distribution $X_{\e_1,\t_0}(0) \sim \bar{p}_{\e_0,\t_0}$. By choosing different values for $\e$ in the dynamics for $t>0$ and the initial distribution, we are modeling a system that experiences a press disturbance $\e_0 \to \e_1$ for all $t > 0$ (see Figure \ref{fig:cartoon}). Since we are assuming that $F_{\e_0,\t_0}$ in \eqref{eq:model} is differentiable with respect to $\e$, the disturbance produces a perturbation on the drift, which to first order on $\Delta \e$ is
\begin{equation}
    F_{\e_0,\t_0} \to F_{\e_0,\t_0} + \jac_\e F_{\e_0,\t_0} \Delta \e
\end{equation}
where $\jac_\e F_{\e_0,\t_0}$ denotes the Jacobian of the vector field $F_{\e_0,\t_0}$  with respect to the vector parameter $\e \in \R^P$. Linear response provides an approximate expression for the probability density of the disturbed process and averages of any observable, in terms of the un-disturbed distribution $\bar{p}_{\e_0,\t_0}$. Note that the corresponding perturbation on the Fokker-Planck operator is
\begin{equation}\label{def:ell}
    \Lop^*_{\e_0,\t_0}[p] \to \Lop^*_{\e_0,\t_0}[p] - \div(p \,\jac_\e F_{\e_0,\t_0} \, \Delta \e)  
    = \Lop^*_{\e_0,\t_0}[p] + \Delta \e  \cdot \ell_\e[p],
\end{equation}
where the operator $\ell_\e$ is defined coordinate-wise as follows
\begin{equation}\label{def:ell_ij}
    \ell_\e[p]^{(i)} = - \sum_{j=1}^N \pderiv{}{x^{(j)}} \left(p \pderiv{F^{(j)}_{\e_0,\t_0}}{\e^{(i)}}\right), \quad i =1,\dots,P.
\end{equation}

If we approximate  to first order the expectation of $S$ under the perturbation as,
\begin{equation}\label{eq:ES0app}
        \EXP_{\bar{p}_{\e_0,\t_0}}(S(X_{\e_1,\t_0}(t)) \approx \bar{S}_{\e_0,\t_0} + 
   \Delta \e \cdot \Delta_\e \bar{S}_{\e_0,\t_0}(t)
\end{equation}
then linear response theory says that the correction $\Delta_\e \bar{S}_{\e_0,\t_0}(t)$ can be written in terms of the forward evolution operator as
\begin{equation}
    \Delta_\e \bar{S}_{\e_0,\t_0}(t) = 
   \int_0^t \int e^{s \Lop_{\t_0,\e_0}^*}\left[ \ell_\e[\bar{p}_{\e_0,\t_0}]\right](x) \, S(x)  \ud x \ud s. \label{eq:LRT1}
\end{equation}
See \cite{pavliotis} for details. 

Note that \eqref{eq:LRT1} is written in terms exclusively of the un-perturbed dynamics and, by \eqref{eq:EpoS}, can be expanded out as a correlation over paths of the process $X_{\e_0,\t_0}$,
\begin{equation}
    \Delta_\e \bar{S}_{\e_0,\t_0}(t) = \int_0^t \EXP_{\bar{p}_{\e_0,\t_0}}\left\{ r_\e[\bar{p}_{\e_0,\t_0}] (X_{\e_0,\t_0}(0)) S(X_{\e_0,\t_0}(s)) \right\} \ud s \label{eq:LRTE1} 
\end{equation}
where $r_\e$ denotes the operator
\begin{equation}
    r_\e[p] = \frac{\ell_\e [p]}{p}
\end{equation}
for suitable $p: \R^N \to \R$.

The approximation \eqref{eq:ES0app} at $t=\tau$ provides an estimate for 
\begin{equation}\label{eq:DEStau}
    \bar{S}^\tau_{\e_1,\t_0} \approx  \bar{S}_{\e_0,\t_0} + \Delta \e \cdot \Delta_\e \bar{S}^\tau_{\e_0,\t_0}.
\end{equation}
To estimate the perturbation $\Delta_\e \bar{S}^\tau_{\e_0,\t_0}$ we can multiply \eqref{eq:ES0app} times $\lambda e^{-\lambda t}$ and integrate with respect to $t$. We obtain,
\begin{align}
    \Delta_\e \bar{S}^\tau_{\e_0,\t_0} &= 
    \int \lambda \int_0^\infty  e^{-\lambda t}\int_0^t  e^{s \Lop_{\t_0,\e_0}}[S](x) \ud s \ud t \,  \ell_\e[\bar{p}_{\e_0,\t_0}](x) \ud x\\
    &=\int \int_0^\infty e^{-\lambda s} e^{s \Lop_{\t_0,\e_0}}[S](x) \ud s  \,  \ell_\e[\bar{p}_{\e_0,\t_0}](x) \ud x\\
    &= \int \Rop^{(\lambda)}_{\e_0,\t_0}[S](x) \ell_\e[\bar{p}_{\e_0,\t_0}](x) \ud x.  \label{eq:LRTtau}
\end{align}
As previously mentioned, the estimate \eqref{eq:LRTtau} is useful whenever the resolvent can be computed analically or numerically. In general, we can also take expectations of \eqref{eq:LRTE1} with respect to $\tau$ to obtain an estimate of $\bar{S}^\tau_{\e_1,\t_0}$ as a correlation suitable for simulations,
\begin{equation}\label{eq:LRTtauE}
    \Delta_\e \bar{S}^\tau_{\e_0,\t_0} = \EXP_{\bar{p}_{\e_0,\t_0}}\left\{ r_\e[\bar{p}_{\e_0,\t_0}] (X_{\e_0,\t_0}(0)) S(X_{\e_0,\t_0}(\tau)) \right\}.
\end{equation}


Replacing the terms $\bar{S}^\tau_{\e_1,\t_0}$ in the expression of \eqref{def:Rtau} by its approximation in \eqref{eq:DEStau} we obtain the following more practical expression for the resilience,
\begin{equation}\label{eq:appRtau}
    R_\tau \approx \tilde{R}_\tau = 1 - \frac{  \bar{S}_{\e_0,\t_0}- \bar{S}_{\e_1,\t_1}}{-\Delta \e \cdot \Delta_\e \bar{S}^\tau_{\e_0,\t_0}}.
\end{equation}
Note that the numerator in \eqref{eq:appRtau} contains the extremes values in \eqref{eq:ES0inf}, namely the `before and after' of a population that has gone through the invariance-perturbation-adaptation sequence. The denominator is positive for stressful environmental perturbations and models the average `damage' caused by the perturbation. It provides an alternative to calculating the challenging term $\bar{S}^\tau_{\e_1,\t_0}$ in \eqref{def:Rtau}. The term that is subtracted from one is therefore the total long-term change in average success as a fraction of the total damage. 


A further estimate for the numerator $\bar{S}_{\e_1,\t_1}- \bar{S}_{\e_0,\t_0}$ in \eqref{eq:appRtau} may be obtained for small $\Delta \e$, $\Delta \t$ by straightforward differentiation. Assuming  smoothness of the invariant distribution and of $S$ with respect to $\e$ and $t$, we can write to first order
\begin{equation}\label{eq:appDES_grad}
    \bar{S}_{\e_1,\t_1}- \bar{S}_{\e_0,\t_0} \approx 
    \Delta \e \cdot \grad_\e \bar{S}_{\e_0,\t_0} + \Delta \t \cdot \grad_\t \bar{S}_{\e_0,\t_0}
\end{equation}
where $\grad_\e$ and $\grad_\t$ denote respectively the gradients with respect to the parameters $\e$ and $\t$ of the invariant expectation of the success at $\t = \t_0, \e = \e_0$. Equation \eqref{eq:appDES_grad} approximates the numerator in \eqref{eq:appRtau} as a combination of the sensitivities of the system to the different parameters, and yields yet another approximate expression for the resilience
\begin{equation}\label{eq:RtauEnAd}
    R_\tau \approx \hat{R}_\tau = 1 - \frac{\Delta \e \cdot \grad_\e \bar{S}_{\e_0,\t_0}}{\Delta \e \cdot \Delta_\e \bar{S}^\tau_{\e_0,\t_0}} 
    - \frac{\Delta \t \cdot \grad_\t \bar{S}_{\e_0,\t_0}}{\Delta \e \cdot \Delta_\e \bar{S}^\tau_{\e_0,\t_0}}.
\end{equation}

For the sake of argument, suppose the environmental parameter is one-dimensional ($P=1$) in \eqref{eq:RtauEnAd}, then
\begin{align}\label{eq:RtauEnAd1D}
    \hat{R}_\tau &= 1 - \frac{\grad_\e \bar{S}_{\e_0,\t_0}}{ \Delta_\e \bar{S}^\tau_{\e_0,\t_0}} 
    + \frac{\grad_\t \bar{S}_{\e_0,\t_0}}{-\Delta_\e \bar{S}^\tau_{\e_0,\t_0}} \cdot \frac{\Delta \t}{\Delta \e}\\
    &=: 1- R_{\tau}^{\rm Env} + R_{\tau}^{\rm Ad} \cdot \frac{\Delta \t}{\Delta \e} \label{eq:RtauParts}
\end{align}
The approximation $\hat{R}_\tau$ to $R^\tau$ separates the proposed resilience measured into its competing terms. The term $R_{\tau}^{\rm Env}$  is positive and depends only on the initial state, the effect of the environmental disturbance, and how much time the system takes to react. It re-scales the sensitivity of the system to changes in $\e$ with respect to the total damage. The term $R_{\tau}^{\rm Ad} \cdot \frac{\Delta \t}{\Delta \e}$ is positive for resilient systems. Its factor $\Delta \t/\Delta \e \approx \deriv{\t}{\e}(\e_0)$ represents the adaptation strategy (recall the discussion after \eqref{def:fullX} around the underlying unresolved adaptation dynamics) and $R_{\tau}^{\rm Ad}$ models the effect of the adaptation as compared to the damage incurred by the environmental perturbation.

Computing the gradients with respect to $\t$ and $\e$ in  expressions \eqref{eq:RtauEnAd} or \eqref{eq:RtauEnAd1D} requires evaluation or measurement of the dynamics under the invariant distribution of a perturbed system along each of the $P$ and $Q$ components of the parameters. For an alternate expression we can use the following result that, as in the linear response theory derivations, allows for computation of derivatives in terms only of expected values of a particular observable of the un-perturbed system. 
\begin{proposition}
If $S$ is independent of $\e$, then
\begin{equation}
   \grad_\e \bar{S}_{\e_0,\t_0} = \int S(x) \grad_\e \bar{p}_{\e_0,\t_0}(x) \ud x 
    = \EXP_{\bar{p}_{\e_0,\t_0}} (\jac_\e F_{\e_0,\t_0} \grad \psi_S)
\end{equation}\label{prop:gradS}
where $\psi_S$ is the solution to Poisson equation $-\Lop_{\e_0,\t_0} [\psi_S] = S$. Similarly for $\grad_\t \bar{S}_{\e_0,\t_0}$. 
\end{proposition}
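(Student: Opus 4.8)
The plan is to reduce everything to differentiating the stationarity identity $\Lop^*_{\e,\t}[\bar{p}_{\e,\t}] = 0$ in the environmental parameter and then pairing the result against the solution $\psi_S$ of the Poisson equation. The first equality is immediate: since $S$ is independent of $\e$, differentiating $\bar{S}_{\e,\t} = \int S(x)\bar{p}_{\e,\t}(x)\ud x$ under the integral sign---justified by the assumed smoothness of $\e \mapsto \bar{p}_{\e,\t}$ together with decay controlling dominated convergence---gives $\grad_\e \bar{S}_{\e_0,\t_0} = \int S(x)\grad_\e \bar{p}_{\e_0,\t_0}(x)\ud x$. The substance is therefore the second equality.

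Working componentwise, set $q^{(i)} := \pderiv{}{\e^{(i)}}\bar{p}_{\e_0,\t_0}$. Differentiating $\Lop^*_{\e,\t}[\bar{p}_{\e,\t}] = 0$ with respect to $\e^{(i)}$ and evaluating at $(\e_0,\t_0)$, the only explicit $\e$-dependence of $\Lop^*_{\e,\t}$ enters through the drift in $-\div(p F_{\e,\t})$, so the product rule yields $\Lop^*_{\e_0,\t_0}[q^{(i)}] - \div\bigl(\bar{p}_{\e_0,\t_0}\,\pderiv{F_{\e_0,\t_0}}{\e^{(i)}}\bigr) = 0$. Comparing with \eqref{def:ell_ij} identifies the second term as exactly $-\ell_\e[\bar{p}_{\e_0,\t_0}]^{(i)}$, so that $q^{(i)}$ solves $\Lop^*_{\e_0,\t_0}[q^{(i)}] = -\ell_\e[\bar{p}_{\e_0,\t_0}]^{(i)}$. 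I would also record the conservation-of-mass identity $\int q^{(i)}\ud x = \pderiv{}{\e^{(i)}}\int \bar{p}_{\e_0,\t_0}\ud x = 0$, which renders the final computation insensitive to any centering of $S$.

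Next I would introduce $\psi_S$ with $-\Lop_{\e_0,\t_0}[\psi_S] = S$ and use that $\Lop^*_{\e_0,\t_0}$ is the formal $L^2(\ud x)$-adjoint of $\Lop_{\e_0,\t_0}$. Then $\int S\, q^{(i)}\ud x = -\int (\Lop_{\e_0,\t_0}\psi_S)\,q^{(i)}\ud x = -\int \psi_S\,\Lop^*_{\e_0,\t_0}[q^{(i)}]\ud x = \int \psi_S\,\ell_\e[\bar{p}_{\e_0,\t_0}]^{(i)}\ud x$, where the middle step integrates by parts (discarding boundary terms at infinity). Finally, substituting the divergence form of $\ell_\e$ from \eqref{def:ell_ij} and integrating by parts once more transfers the derivative onto $\psi_S$, giving $\int \bigl(\pderiv{F_{\e_0,\t_0}}{\e^{(i)}}\cdot\grad\psi_S\bigr)\bar{p}_{\e_0,\t_0}\ud x$, which is the $i$-th component of $\EXP_{\bar{p}_{\e_0,\t_0}}(\jac_\e F_{\e_0,\t_0}\grad\psi_S)$. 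Collecting components over $i = 1,\dots,P$ proves the claim, and the argument for $\grad_\t \bar{S}_{\e_0,\t_0}$ is verbatim with $\t$ in place of $\e$ and the same $\psi_S$.

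I expect the main obstacle to be analytic rather than algebraic: rigorously justifying (i) the differentiability of $\e\mapsto\bar{p}_{\e,\t}$ and the interchange of $\grad_\e$ with $\int$; (ii) the vanishing of all boundary terms in the integrations by parts, which requires decay of $\bar{p}_{\e_0,\t_0}$, of $q^{(i)}$ and of their fluxes to dominate the growth of $\psi_S$ and $\grad\psi_S$; and (iii) the solvability of the Poisson equation. On the last point, the Fredholm condition for $-\Lop_{\e_0,\t_0}[\psi_S] = S$ is $\bar{S}_{\e_0,\t_0} = 0$; in general one should instead solve $-\Lop_{\e_0,\t_0}[\psi_S] = S - \bar{S}_{\e_0,\t_0}$, but the mass identity $\int q^{(i)}\ud x = 0$ gives $\int (S - \bar{S}_{\e_0,\t_0})q^{(i)}\ud x = \int S\,q^{(i)}\ud x$, so the stated formula is unchanged. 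Under the paper's standing smoothness, boundedness and ergodicity assumptions (and, say, a spectral gap for $\Lop_{\e_0,\t_0}$ guaranteeing a well-behaved $\psi_S$), these technical points are standard, and the displayed chain of identities is the essential content.
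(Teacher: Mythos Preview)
Your proof is correct and follows essentially the same route as the paper: replace $S$ by $-\Lop_{\e_0,\t_0}[\psi_S]$, pass to the adjoint, use the identity $\Lop^*_{\e_0,\t_0}[\partial_{\e^{(i)}}\bar{p}_{\e_0,\t_0}] = \div(\bar{p}_{\e_0,\t_0}\,\partial_{\e^{(i)}}F_{\e_0,\t_0})$ obtained by differentiating stationarity, and finish with one integration by parts. Your version is more explicit (componentwise, with the link to $\ell_\e$) and more careful about the analytic side---the Fredholm/centering issue for the Poisson equation and the vanishing of boundary terms---which the paper leaves implicit, but the argument is the same.
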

\begin{proof}
    By the definition of $\psi_S$ we can write
    \begin{align*}
    \grad_\e \bar{S}_{\e_0,\t_0} &= -\int \Lop_{\e_0,\t_0} [\psi_S] \grad_\e \bar{p}_{\e_0,\t_0}(x) \ud x \\
    &= -\int  \psi_S(x) \Lop_{\t_0,\e_0}^*[ \grad_\e \bar{p}_{\e_0,\t_0}](x) \ud x.
    \end{align*}
Using the particular form of the forward operator \eqref{def:LopF} and the fact that $\Lop_{\e_0,\t_0}^*[\bar{p}_{\e_0,\t_0}]=0$, one can write
\begin{equation*}
    \Lop_{\t_0,\e_0}^*[ \grad_\e \bar{p}_{\e_0,\t_0}] = \grad \left(\bar{p}_{\e_0,\t_0} \jac_\e F_{\e_0,\t_0}\right).
\end{equation*}
Finally, integration by parts yields
\begin{align*}
    \grad_\e \bar{S}_{\e_0,\t_0} &= \int \grad \psi_S(x) \bar{p}_{\e_0,\t_0}(x) \jac_\e F_{\e_0,\t_0}(x) \ud x
\end{align*}
as desired.
\end{proof}

Note that a slightly more convoluted proof of Proposition \ref{prop:gradS} can be obtained using the linear response representation \eqref{eq:LRT1}, the identity $ \int_0^t  e^{s \Lop_{\t_0,\e_0}} \ud s = (I- e^{t \Lop_{\t_0,\e_0}})[(-\Lop_{\t_0,\e_0})^{-1}]$, and passing to the limit as $t \to \infty$. See \citet[section 9.3]{pavliotis}. A closely related formula for the sensitivity of a finite state Markov chains to general perturbations has been presented in \cite{Lucarini2016}.

\section{Example: a Gradient-Driven Stochastic Differential Equation}
\label{sec:example}
 We illustrate the measure of resilience through adaptation in  a  case  where the invariant distribution generated by  \eqref{eq:model} can be computed analytically. We consider a gradient-driven Langevin equation with a one-dimensional quartic potential (also known as Smoluchowski Diffusion Equation), 
\begin{align}
 \ud X &= -\pderiv{V_{\e,\t}}{x} (X) dt + \sigma \ud W_t, \quad t>0
 \label{eq:gradientexample}\\
 V_{\e,\t}(x) &= \t^{(1)} x^4 - \t^{(2)} x^3 + \e x - c, \quad x \in \R.
\end{align}
The constant $c$ ensures that $V_{\e,\t}(x)\geq 0$ for all $x \in \R$. The adaptation parameter is two-dimensional $\t = (\t^{(1)},\t^{(2)}) \in [0.5,1]\times[1,2]$. The environmental parameter takes values in $\e \in [0,2]$. Namely $P=1$ and $Q=2$, which exemplifies the typical case where  regulatory parameters are more numerous than environmental parameters.

For any combination $(\e,\t)$ in those ranges, the potential $V_{\e,\t}$ is confining and therefore the solution process $X_{\e,\t}$ to \eqref{eq:gradientexample} is ergodic with a unique invariant distribution density given by
\begin{equation}\label{def:pbarPot}
    \bar{p}_{\e,\t}(x) = \frac{1}{Z_{\e,\t}} \exp\left(-\frac{2}{\sigma^2} V_{\e,\t}(x)\right), \quad x \in \R,
\end{equation}
where $Z_{\e,\t}$ is a constant ensuring that $\bar{p}_{\e,\t}$ integrates to unity. The equilibrium states of the potential $V_{\e,\t}$ correspond to the roots of $\pderiv{}{x}V_{\e,\t}$ and to the local maxima of $\bar{p}_{\e,\t}$. In fact, the system undergoes a `supercritical pitchfork' bifurcation at
\begin{equation}\label{def:ebifur}
    \e_b(\t) = \frac{(\alpha^{(2)})^3}{2 (\alpha^{(1)})^2}
\end{equation}
having one stable equilibrium if $\e>\e_b$ and two if $\e<\e_b$ \citep{strogatz2018nonlinear}. See Figure \ref{fig:example}a.

\begin{figure}
    \centering
    \subfigure[]{\includegraphics[scale=0.7]{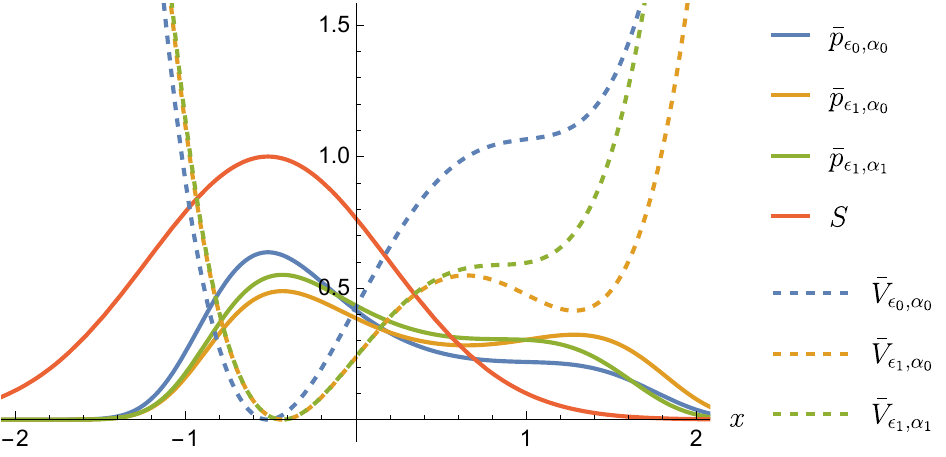}} 
    \subfigure[]{\includegraphics[scale=0.7]{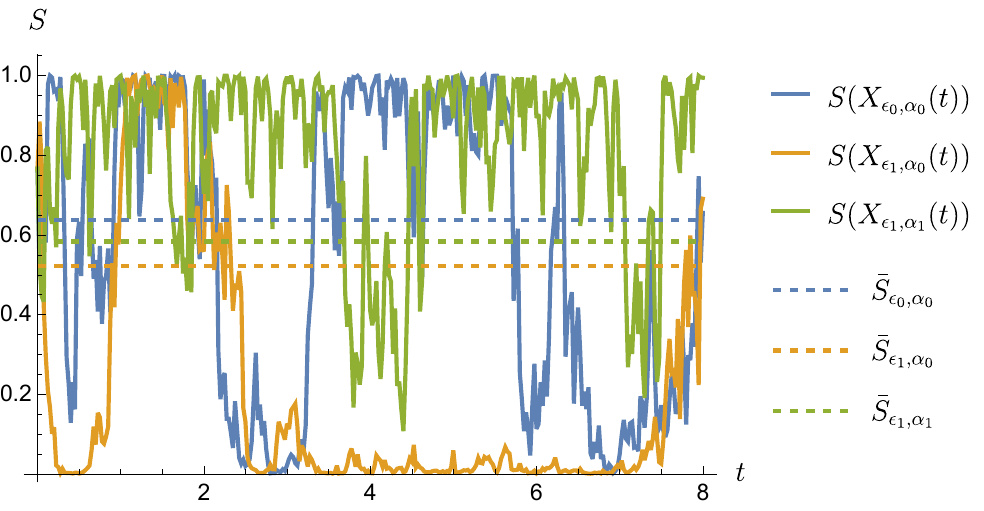}} 
    \caption{(a) Stationary distributions $\bar{p}_{\e,\t}$ and potentials $V_{\e,\t}$ for an invariance-perturbation-adaptation sequence with $\sigma^2=2$, $\e_0=1.2$ $\e_1 = 0.8$,  $\t_0=(0.56,1.12)$ and $\t_1 = (0.63,1.07)$. Note the unimodality for $(\e_0,\t_0) $ and $(\e_1,\t_0)$, and bimodality for $(\e_1,\t_1)$. The success function $S$ is shown in red for comparison. (b) Paths of $S(X_{\e,\t}(t))$ for one realization of the process \eqref{eq:gradientexample} in each of the parameters combinations in (a). Dashed horizontal lines show the mean of $S$ with respect to the corresponding invariant distribution. Note that the process spends much more time in low success states for the bimodal configuration.} 
    \label{fig:example}
\end{figure}

The disturbances to $\e$ and $\t$ in system \eqref{eq:gradientexample} are performed according to a protocol of invariance-perturbation-adaptation we now explain. Suppose that `normal' operating conditions encoded in $\e_0,\t_0$ satisfy $\e_0 > \e_b(\t_0)$ so that there is only one equilibrium solution which, for the range of parameters considered here, is near $x=-1/2$ (see Figure \ref{fig:example}a). We will consider this as the preferred/most successful state of the system. The environmental perturbation $\e_0 \to \e_1$  is such that  $\e_1 < \e_b(\t_0)$, inducing a pitchfork bifurcation that creates a second stable equilibrium in the positive real line. Paths of $X_{\e_1,\t_0}$ will then likely spend time in the basin of attraction of this second stable equilibrium, which we will presume is highly undesirable. The preference of $x=-1/2$ over the second equilibrium is is encoded for this example in the success function by defining
\begin{equation}\label{def:Sexample}
    S(x) = e^{-(x+1/2)^2}, \quad x \in \R.
\end{equation}
The value of $S(X(t))$ gives therefore an idea of how close the system is to `preferred' operating conditions.

The adaptation $\t_0 \to \t_1$ takes place instantly at a random reaction time $\tau \sim \exp(\lambda)$ after the environmental perturbation.  We consider the following adaptation strategy: $\Delta \t$ is a vector in the direction of $\grad_\t \bar{S}_{\e_1,\t_0}$ such that for $\t_1 = \t_0 + \Delta \t$, $\e_b(\t_1) < \e_1$ holds. The resulting adaptation is, by construction, such that the system reverts to the case $\e_1 > \e_b(\t_1)$ where only the preferred stable equilibrium remains and $\bar{p}_{\e_1,\t_1}$ is again unimodal. 

The choice of $S$ in \eqref{def:Sexample} and adaptation strategy ensures that the prescribed environmental and adaptation perturbations are, respectively, stressful and beneficial in average. Namely, in the notation of \eqref{eq:ergodicity}
\begin{equation}\label{eq:SbIneq}
    \bar{S}_{\e_0,\t_0} > \bar{S}_{\e_1,\t_0}, \quad  \bar{S}_{\e_1,\t_1} > \bar{S}_{\e_1,\t_0}.
\end{equation}
This strategy models the case in which the system can correct the bifurcation while at the same time ensuring a future mean success better than the do-nothing scenario $\bar{S}_{\e_1,\t_0}$. 

Figure \ref{fig:example}(a) depicts the potentials and stationary distributions in one example of the whole invariance-perturbation-adaptation sequence. Superimposed is the success function $S$. Figure \ref{fig:example}(b) shows the success function $S(X_{\e,\t}(t))$ along a single path for each regime. It highlights the noisy nature of the dynamics, including the episodic switches between basins of attraction in each of the three operating regimes, namely realizations exhibiting what in Section \ref{subsec:background} we referred to as `path-wise resilience'. The expectations of the success function with respect to each of the stationary distributions are depicted as well. 

Figure \ref{fig:exampleMeans} depicts the time evolution of the ensemble mean $\EXP S(X(t))$ for the process constructed as in \eqref{def:fullX} with the dynamics of \eqref{eq:gradientexample}, for a single combination of the invariance-perturbation-adaptation protocol $\e_0,\e_1,\t_0,\t_1$ described above. All the terms in the usual definitions of resilience, including $R$ given in equation \eqref{def:usualR}, can be `read-off' from this figure. See for example the discussions around Figure 4d in \cite{van2021unifying} and Figure 5 in \cite{yi2021review}. Note that $\EXP S(X(t))$ in Figure \ref{fig:exampleMeans} above varies between the initial and final values given by \eqref{eq:ES0inf} and that its minimum $\min_{t \geq 0} \EXP(S(X(t))$ is close to the value of $\bar{S}^\tau_{\e_1,\t_0}$.

\begin{figure}
    \centering
    \includegraphics[scale=0.7]{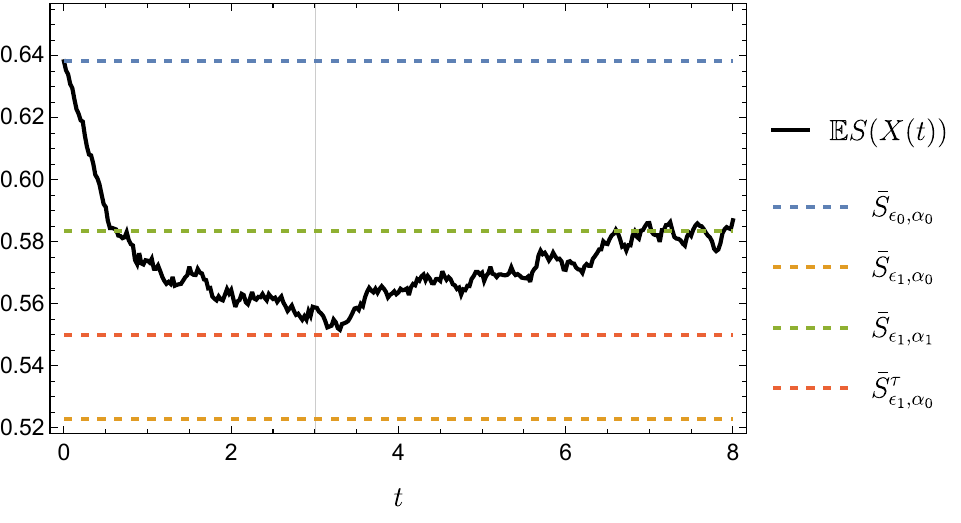}
    \caption{Ensemble mean $\EXP S(X(t))$ for $X$ given by \eqref{def:fullX} with $\sigma,\e_0,\e_1,\t_0,\t_1$ as in Figure \ref{fig:example}, and $\lambda = 1/3$. The mean was computed from a sample of 6000 paths. Horizontal dashed lines mark the invariant means of $S$ under each scenario and the mean of $S(X(\tau))$. The vertical line marks $t = \EXP(\tau)=1/\lambda$ The resilience is $R=0.44$. The value $\bar{S}_{\e_1,\t_0}$ is the long-term mean of the success in the do-nothing scenario, and plays no role in the computation of the resilience.}
    \label{fig:exampleMeans}
\end{figure}

\begin{figure}
    \centering
    \includegraphics[scale=0.75]{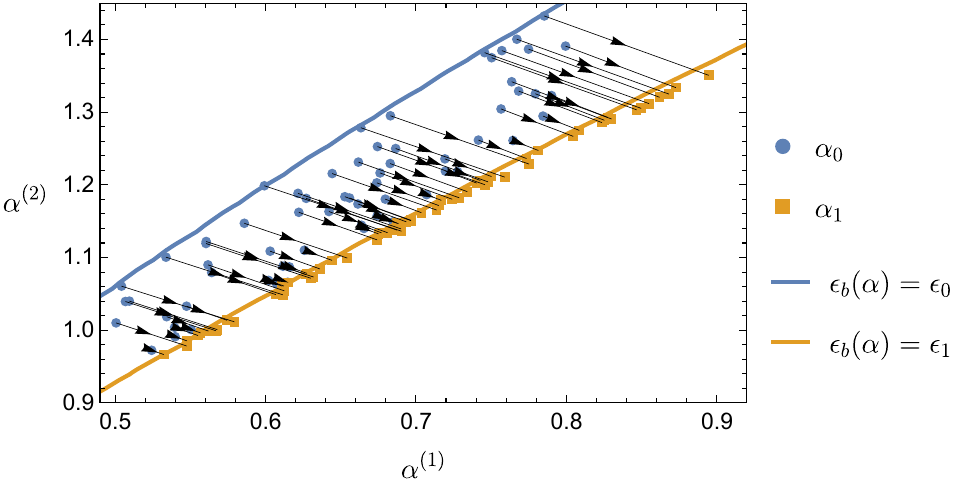}
    \caption{Values of $\t_0, \t_1$ used in the examples for numerically computation of $R$ and $R_\tau$. The solid lines are contour lines for the bifurcation threshold $\e_b$ in \eqref{def:ebifur} for $\e_0$ and $\e_1$ as in Figure \ref{fig:example}. Black arrows show that each adaption follows the strategy leading to equation \eqref{eq:SbIneq}: from $\e_b(\t_0)>\e_1$ and hence having two stable equilibria, to  $\e_b(\t_1) < \e_1$ and a single stable equilibrium after adapting.}
    \label{fig:disturbances}
\end{figure}

In order to robustly test our proposal $R_\tau$ in equation \eqref{def:Rtau} as a coherent measure of adaptation resilience, we conducted numerical experiments on different configurations of problem \ref{eq:gradientexample}.  We considered 64 different populations, each with a different value of $\t_0$. All populations undergo the same environmental bifurcation-inducing disturbance $\e_0 = 0.8 \to \e_1 = 1.2$. The adaptation strategy followed by the populations is as described above, and results in different a value of $\t_1$ for each example. Figure \ref{fig:disturbances} shows the location of $\t_0$ and $\t_1$ for each system with respect to bifurcation thresholds.

The numerical results are depicted in  Figure \ref{fig:RvRtau}. The dashed line indicates a perfect match between $R$ and $R_\tau$. We note that the data  is increasing, from which we conclude that $R_\tau$ can be used to establish when one system is more resilient than other in the usual sense. Note also that the two measures tend to agree for the most resilient examples, which in this case simply transpires the good approximation of $\bar{S}^{\tau}_{\e_1,\t_0}$ to $\min \EXP S(X(t))$. Those examples for which $R_\tau < 0$ correspond to cases in which $\bar{S}_{\e_1,\t_1} < \bar{S}^{\tau}_{\e_1,\t_0}$, namely the adaptation $\t_0 \to \t_1$ did not increase the mean success as compared with the mean success at $\tau$.

\begin{figure}
    \centering
    \includegraphics[scale=0.75]{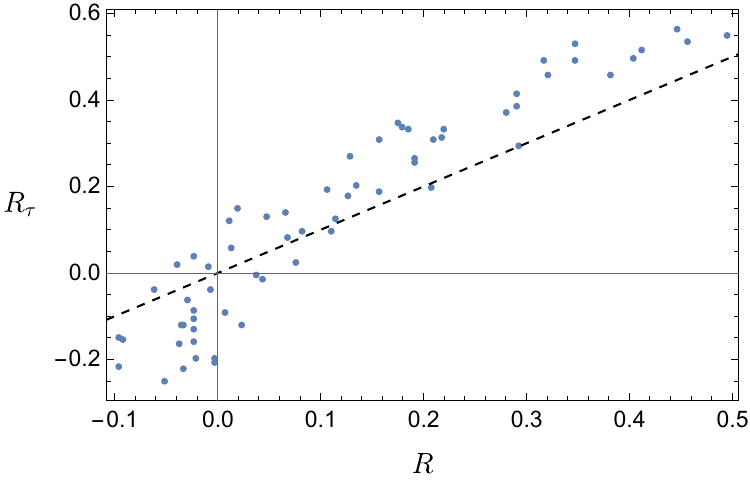}
    \caption{Comparison between $R$ and $R_\tau$ for the 64 combination of parameters shown in Figure \ref{fig:disturbances}. The dashed line marks equality.}
    \label{fig:RvRtau}
\end{figure}

Figure \ref{fig:RtauParts} shows the approximate components of $R_{\tau}$ as per equation \eqref{eq:RtauParts}. The environmental component is essentially constant throughout our examples, and most of the variation of the resilience is due to the adaptation component.
\begin{figure}
    \centering
    \includegraphics[scale=0.7]{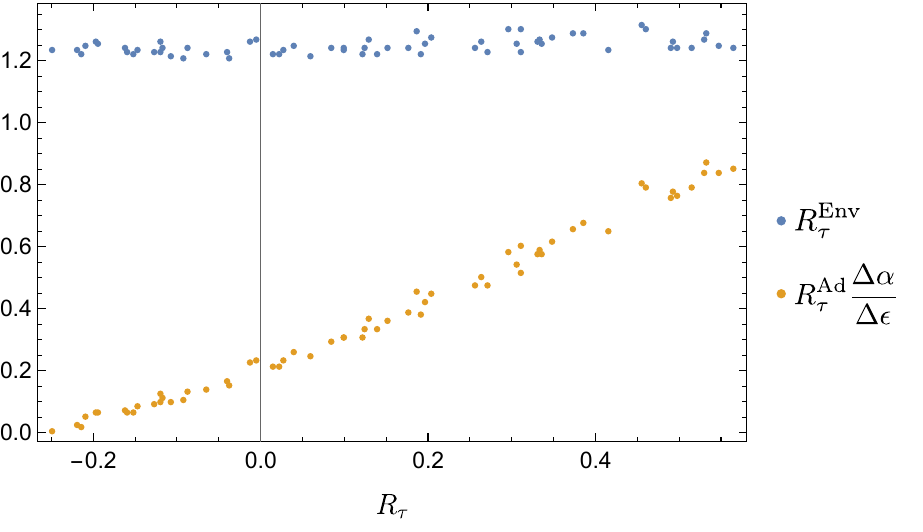}
    \caption{Comparison between the environmental and adaptation components of $R_\tau$ in \eqref{def:Rtau}. In this example, most of the variability in the resilience comes from the adaptation term.}
    \label{fig:RtauParts}
\end{figure}

In Section \ref{sec:Practical} we derived more practical measures to estimate $R$ or $R_\tau$, that we foresee as practical since they can be informed by field data. In Figure \ref{fig:RtauvRs} we evaluate how $\tilde{R}_\tau$ in \eqref{eq:appRtau} and $\hat{R}_\tau$ in \eqref{eq:RtauEnAd} compare to the estimate $R_\tau$. We used the same ensemble and parameter values. The figures suggest that the  empirical versions of these are good approximations for the resilience of the system.

\begin{figure}
   \centering
    \includegraphics[scale=0.7]{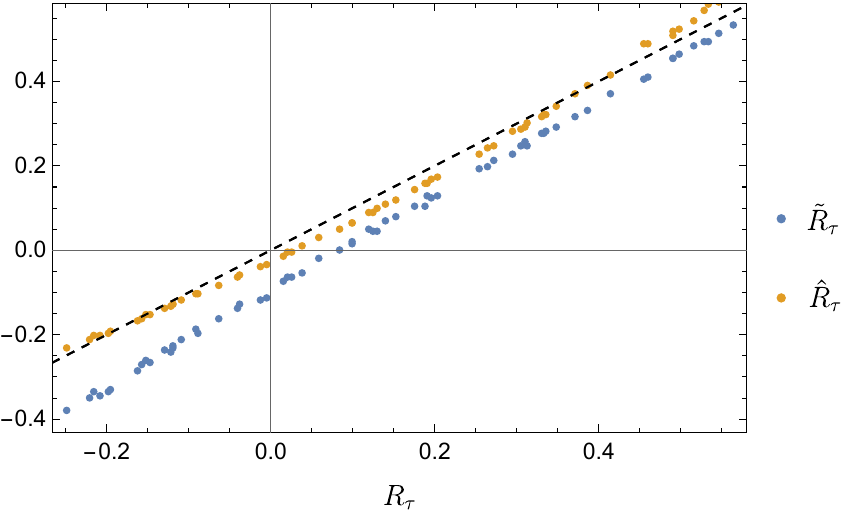}
    \caption{Comparison between $R_\tau$ and its approximations $\tilde{R}_\tau$ (equation \eqref{eq:appRtau}) and $\hat{R}_\tau$ (equation \eqref{eq:RtauEnAd}) obtained through linear response theory. The dashed line marks equality.}
    \label{fig:RtauvRs}
\end{figure}

As mentioned in Section \ref{sec:Practical}, $R_\tau$ can be compared to existing measures or resilience for noisy systems. The value $R^{\VAR}_\tau$ in \eqref{def:Rvar} aims at quantifying resilience by analyzing the changes to the variance in $X$ \citep{ives1995measuring}. Namely, we make $S(x)= S_{\e,\t}(x) = (x- \bar{x}_{\e,\t})^2$ where $\bar{x}_{\e,\t}$ is the mean of the distribution $\bar{p}_{\e,\t}$. Figure \ref{fig:Rtaus}a shows a comparison between $R_\tau$ and $R_{\tau}^{\VAR}$ for all the examples.  Again, since the pattern is increasing, the resilience $R_\tau$ can be used as a proxy for a resilience measure built on the variability induced by the disturbances. This is simply the result of choosing a success function $S$ in \eqref{def:Sexample} that is maximized near the mean $\bar{x}_{\e_0,\t_0}$.

For the model in \eqref{eq:gradientexample}, one can numerically compute the largest eigenvalues $\rho_{\e,\t}$ of the forward operator $\Lop_{\e,\t}^*$ in \eqref{def:LopF} for any $\e,\t$. Thus we can assess whether $R_{\tau}$ gives any information about the rates of exit and return to the equilibrium after the perturbation.  Note that a resilient system would be one in which the eigenvalue after adaptation $|\rho_{\e_1,\t_1}|$ is large compared to the eigenvalue after the environmental perturbation $|\rho_{\e_1,\t_0}|$. Figure \ref{fig:Rtaus}b shows a comparison between $R_\tau$ and the ratio $|\rho_{\e_1,\t_1}|/|\rho_{\e_1,\t_0}|$ for each example, indicating that $R_\tau$ is consistent with this interpretation of resilience.

\begin{figure}
    \centering
    \subfigure[]{\includegraphics[scale=0.59]{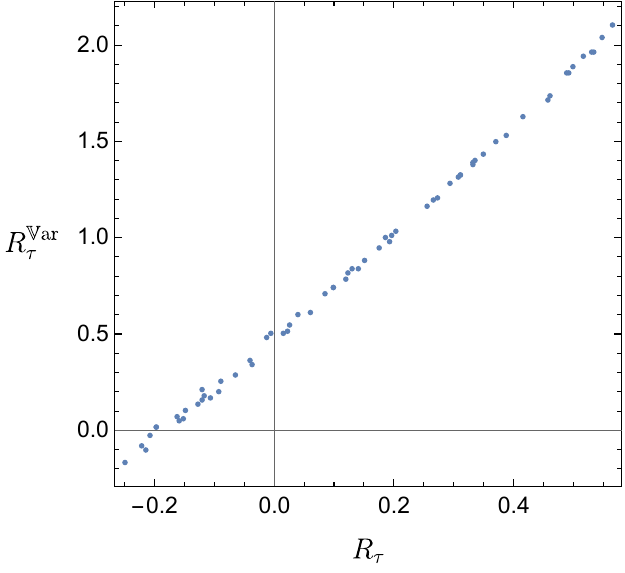}} 
    \subfigure[]{\includegraphics[scale=0.59]{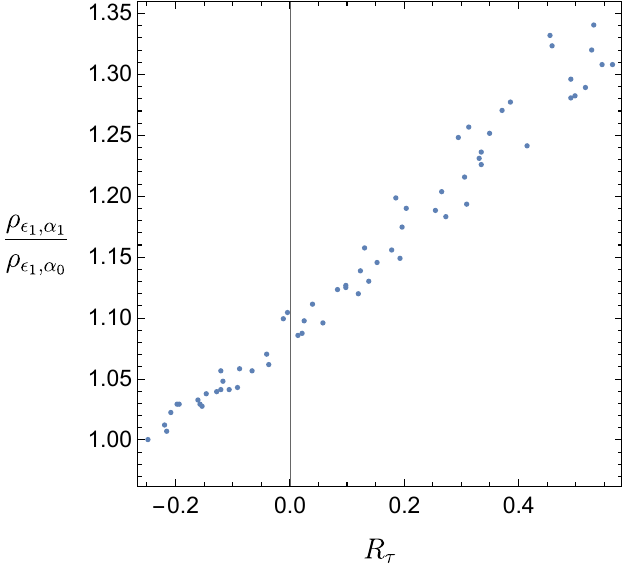}} 
    \caption{(a) Comparison between $R_\tau$ and the measure of resilience $R^{\VAR}_\tau$ in \eqref{def:Rvar} based on the variance of $X_{\e,\t}$. (b) Comparison between $R_\tau$ and the ratio between the dominant eigenvalues of the adapted vs non-adapted system.}  
    \label{fig:Rtaus}
\end{figure}

In summary, the functional $R_\tau$ provides a measure of resilience that, at least for the adaptation strategy used in the examples shown here, is coherent with various notions of resilience used in the context of noisy dynamical systems.

\section{Discussion and Conclusions}
\label{sec:conc}

We have proposed a measure of ecological resilience which quantifies the success of a forced/dissipative system to adapt following an initial applied stress. The relevance of our resilience measure to biological systems, rests upon the assumption that the time evolution of the biological system is described by a stochastic differential equation with a initial stationary probability distribution (prior to the application of a stress). The stochastic nature of the process means that for each starting value of the system there is an ensemble of possible histories (paths). Moreover, our conceptual model also accounts for the random variability among states or subsystems within an organism.

At the mechanistic level, our measure of resilience is strongly inspired by the homeodynamic nature of biological systems that react to some type of imposed stress with an eventual physiological adaptation. Unlike most measures of resilience, we are suggesting that the time history of a well-chosen success observable is essential to the determination of the organism’s resilience to adaptation. Further, we argue that the resilience can be best captured in the ensemble mean among a population of individual subsystems which can have different starting conditions, and random responses to stress. We are suggesting here that a typical success history in noisy/forced systems, as typified by Figure \ref{fig:exampleMeans}, can only be uncovered by ensemble methods such as those proposed in this study. 

It follows from the approximation for $R_\tau$ in \eqref{eq:appRtau} that low resilience will occur in systems with low values of $|\Delta_{\e} \bar{S}^{\tau}_{\e_0,t_0}|$, which in \eqref{eq:LRTtauE} is written as a temporal correlation between functionals of the process $X$. The relationship between  correlations and critical transitions in complex systems is now well-established as in the concept of  `critical slowing down' of correlations \citep{ives1995measuring,Scheffer:2009aa,santos2022}. Indeed, the quantity $\Delta_{\e} \bar{S}^{\tau}_{\e_0,t_0}$  in the example of Section \ref{sec:example} is related to how often system paths transition towards the second stable equilibrium. Resilience, however, must also consider the ability of the system to recover. Our characterization \eqref{eq:RtauParts} of $R_{\tau}^{\rm Env}$ and $ R_{\tau}^{\rm Ad}$ provides the appropriate quantities with respect to which $\Delta_{\e} \bar{S}^{\tau}_{\e_0,t_0}$ must be compared in order obtain a more complete picture.

We now address  the practicality of our definition of resilience to adaptation as  a proposed measure of resilience. We do so by showing how it may be estimated using actual experimental, field or simulation data.

First of all, a dynamic model is not intrinsic to the definition of the resilience measures $R$ or $R_\tau$, and it is {\it not} required to estimate resilience from data. Knowing the densities $\bar{p}_{\e,\t}$ over the state space is also not a requirement. The estimation of $R_\tau$ requires only \textit{samples} from such densities. Namely, to identify, simulate, or prepare an ensemble of individual systems that have been operating under constant environmental conditions, subject them to an environmental disturbance, and measuring the chosen success function $S$ through the adaptation process. The ensemble must be numerous enough to accurately compute expected values of $S$ from sample averages. For estimating $R$, the values of $S(X(t))$ must be sampled from the field sufficiently regularly in time in order to capture (as in Figure \ref{fig:exampleMeans}) the initial, minimum, and limiting values of $\EXP S(X(t))$ required by the definition \eqref{def:usualR}. Our proposal $R_\tau$, on the other hand, requires the detection of the moment at which adaptation occurs in each individual. Measuring success at such moment is the equivalent of sampling the random variable $S(X(\tau))$. In this case only three data points are required from each system: the unperturbed state, the moment of adaptation, and a final state when the population is observed to revert to its new statistical equilibrium.

Linear response theory yields further tools to estimate $R_{\tau}$, as detailed in Section \ref{sec:Estimation}. These estimates take the form of expectations and correlations of specific functionals of the process $X$. See expressions \eqref{eq:S10tau}, \eqref{eq:LRTE1}, \eqref{eq:LRTtauE} and Proposition \ref{prop:gradS}. This estimates can be performed by a combination of experimental data assimilation, simulation and analytical tools, depending on the specific model. A key feature of our framework is that different types of approximation to $R_\tau$ can achieved depending on the level of detail of the available mathematical model. In its full form, every component of the resilience in \eqref{eq:RtauParts} can be estimated by simulating only the unperturbed process $X_{\e_0,\t_0}$.

With regards to the definition of resilience in terms of an almost arbitrary success function $S$, we argue that it yields a more useful and encompassing framework, since notions of stress, well-being, or health can in many case be subjective and application-dependent: $S$ can be  any observable of interest that quantifies the degree to which a system is successful. In principle, $S$ doesn't have to be positive, bounded or continuous for the definition of $R_\tau$ to make sense, however, we suppose that $S \in \text{Dom}(\Lop_{\e,\t})$ in \eqref{def:LopB}, which typically contains only bounded, smooth functions. 
  
 This approach is antithetical to the idea that there is a universal notion of resilience even when confronted with the same systems biology or biological mechanism. For our illustrative example, in \eqref{def:Sexample}, we simply defined $S$ as a non-negative function with a mean centered at an equilibrium point labeled as preferred, with a variance that conveyed how narrowly should success be defined. Alternatively one can propose to measure the resilience of an ensemble of organisms with respect to some biomarker, productivity rate with economic value, or a statistical measure as in the definition \eqref{def:Rvar} in $R_{\tau}^{\VAR}$.

We exemplified our proposal with a very simple model assuming low-dimensional gradient flow dynamics, and forced and adapted the dynamics by simple changes upon the potential. We argue however, that our approach is amiable to biological systems that not only change through time, but have inherent hierarchies of biological organization. For real-life  biological systems assessing resilience will involve a series of interconnected dynamics described by mathematical or data-driven models ({\it e.g.}, modeling cell to leaf,  leaf to plant, etc). The more complex model might capture the exact relationship between the state variable and the external forcing, or the dynamics of adaptation itself. Our proposal for  resilience to adaptation will go through on these more complex systems, so long as assumptions regarding the stationarity of distribution of the state vector dynamics hold.

The idea of using ensembles and a success function as the means to estimate resilience, is very general. It is, however, limited to situations in which samples from the same distribution can be measured, simulated or observed, and some control over the environmental perturbations can be exerted. The estimation of $S(X(\tau))$ in \eqref{def:Rtau} brings the additional challenge of requiring a way to detect or model when a threshold into adaptation has occurred. The particular model \eqref{eq:model} has some underlying assumptions that limit the scope of this work. For one, the smoothness and regularity conditions on $F$ might not hold, but more significantly the existence of an invariant distribution might be unknown or impossible to assume. The assumption of linear noise is also problematic in those systems where environmental perturbations increase inherent variability. However, although not explicit in this paper, linear response theory can also  be applied to perturbations on $\sigma = \sigma_{\e,\t}$ \citep[see][for example]{pavliotis}. The estimation tools provided in section \ref{sec:Estimation} impose more specific limitations which might not hold in general. Specifically, linear respons theory requires small $\Delta\e$ and $\Delta \t$ and the smoothness of $F_{\e,\t}$ and $\bar{p}_{\e,\t}$ with respect to $\e$ and $\t$.

\bibliographystyle{apalike}

\bibliography{bio}

\listoffigures
\end{document}